\newtheorem{theorem}{Theorem}[section]
\newtheorem{lemma}[theorem]{Lemma}
\renewcommand{\mathbf}{\boldsymbol}
\newcommand{\mb}{\mathbf}
\newcommand{\mc}{\mathcal}
\newcommand{\bb}{\mathbb}
\newcommand{\set}[1]{\left\{ #1 \right\}}
\newcommand{ \brac }[1]{\left[ #1 \right]}
\newcommand{ \paren }[1]{ \left( #1 \right) }
\DeclareMathOperator{\mini}{minimize}
\DeclareMathOperator{\st}{subject\; to}
\newcommand{\norm}[2]{\left\| #1 \right\|_{#2}}
\newcommand{\innerprod}[2]{\left\langle #1,  #2 \right\rangle}
\newcommand{\expect}[1]{\bb E\left[ #1 \right]}
\numberwithin{equation}{section}
\title{A Local Analysis of Block Coordinate Descent for Gaussian Phase Retrieval}
\author{\name David Barmherzig \email{davidbar@stanford.edu}\\
  \addr{Stanford University}\\
  \name Ju Sun \email{sunju@stanford.edu}\\
  \addr{Stanford University}
}
\begin{document}

\maketitle

\begin{abstract}
While convergence of the Alternating Direction Method of Multipliers (ADMM) on convex problems is well studied, convergence on nonconvex problems is only partially understood. In this paper, we consider the Gaussian phase retrieval problem, formulated as a linear constrained optimization problem with a biconvex objective. The particular structure allows for a novel application of the ADMM. It can be shown that the dual variable is zero at the global minimizer. This motivates the analysis of a block coordinate descent algorithm, which is equivalent to the ADMM with the dual variable fixed to be zero. We show that the block coordinate descent algorithm converges to the global minimizer at a linear rate, when starting from a deterministically achievable initialization point.
\end{abstract}

\section{Introduction}
The Phase Retrieval (PR) problem consists of recovering a vector $\mb x \in \mathbb{R}^n$ (or $\mathbb{C}^n$) from a set of magnitude measurements $y_k=|\mb a_k^* \mb x|, k=1, \cdots, m$, where $\mb a_k \in \mathbb{R}^n$ (or $\mathbb{C}^n$), $k=1, \cdots, m$, are known as the \textit{measurement vectors}. PR arises in many physical settings (~\cite{harrison1993pr, walther1963, balan2010, shechtman2015phase}), in which case $\mb a_k$'s are derived from the Fourier basis vectors. Toward mathematical understanding, recent efforts have focused on the \textit{generalized phase retrieval} (GPR) problem, in which $\mb a_k$'s can be vectors other than Fourier. Numerous recovery results are now available for the case $\mb a_k$'s are Gaussian, as summarized in~\cite{jaganathan2015phase}. Among them are the results based on nonconvex optimization, mostly based on the following template: firstly, an initialization close to a global minimizer is found using a spectral method; secondly, a gradient descent type algorithm is shown to converge locally to a global minimizer when starting from the initialization. 

ADMM works remarkably well on certain structural convex problems and comes with strong convergence guarantees~\cite{boyd2011admm}. Empirically, ADMM also works surprisingly well\footnote{Indeed, the working algorithm used for practical Fourier PR can be seen as a variant of the ADMM~\cite{wen2012alternating}. } on certain structural nonconvex problems. However, the current theories (see, e.g., \cite{wang2015global}) only guarantee convergence to critical points, aka global convergence. In this work, we study the local convergence behavior of ADMM, working with the GPR as a model problem. On a natural constrained least-squares formulation for GPR under the Gaussian measurement model, the dual variable at the optimal point is shown to be zero. This in turn motivates us to analyze a block coordinate descent (BCD) algorithm, which is equivalent to ADMM with the dual variable fixed to be zero. In this preliminary study, we apply the BCD algorithm to the expected\footnote{...due to the randomness induced by the random measurement vectors.} optimization problem. We show that the BCD converges to a global minimizer with a linear converge rate when being initialized in a neighborhood of the global minimizers, aka local convergence. The required initialization can be obtained efficiently through a spectral decomposition, as shown in several prior works.

\section{A biconvex problem formulation and the ADMM algorithm}

Suppose $\mb x \in \mathbb{R}^n$, and consider the magnitude measurements $y_k = |\mb a^T \mb x|$, where $\mb a_k$'s are i.i.d standard Gaussian vectors. We introduce the following biconvex least-squares formulation of the Gaussian phase retrieval problem
\begin{align} \label{eqn:obj-finite}
\begin{split}
\mini_{\mb z, \mb w \in \mathbb{R}^n} & \quad f \paren{\mb z, \mb w} \doteq \frac{1}{4m} \sum_{k=1}^m \paren{y_k^2 - \mb a_k^T \mb z \mb a_k^T \mb w}^2 \\
\st  & \quad \mb z = \mb w.
\end{split}
\end{align}
This is the variable-splitting reformulation of another least-squares formulation 
\begin{align}
\mini_{\mb z \in \mathbb{R}^n} & \quad \frac{1}{4m} \sum_{k=1}^m \paren{y_k^2 - \paren{\mb a_k^T \mb z}^2 }^2, 
\end{align}
which has been systematically studied before (e.g.,~\cite{candes2015wf,Sun2017Geometric}).

Note that $f$ is biconvex (and in fact biquadratic) and that we have a linear equality constraint.  Following the approach of~\cite{boyd2011admm}, we consider the ADMM algorithm applied to~\cref{eqn:obj-finite}, which is given by
\begin{align} \label{alg:ADMM}
\begin{split}
\mb z^{(k+1)} & = \mathop{\arg\min}_{\mb z} \mc L\paren{\mb z, \mb w^{(i)}, \mb \lambda^{(k)}}, \\
\mb w^{(k+1)} & = \mathop{\arg\min}_{\mb w} \mc L\paren{\mb z^{(k+1)}, \mb w, \mb \lambda^{(k)}}, \\
\mb \lambda^{(k+1)} & = \mb \lambda^{(k)} + \rho\paren{\mb z^{(k+1)} - \mb w^{(k+1)}}, 
\end{split}
\end{align}
where $\rho >0$ is a chosen penalty parameter, and $L$ is the augmented Lagrangian function given by
\begin{align} \label{eqn:Lagrangian}
\mc L\paren{\mb z, \mb w, \mb \lambda} \doteq f \paren{\mb z, \mb w} + \innerprod{\mb \lambda}{\mb z - \mb w} + \frac{\rho}{2} \norm{\mb z - \mb w}{}^2.
\end{align}

\section{A natural reduction to block coordinate descent}

We observe the following key property of the augmented Lagrangian function~\cref{eqn:Lagrangian}.

\begin{lemma} \label{lem:Lag-crit}
A triple $\paren{\mb z, \mb w, \mb \lambda}$ is a critical point of $\mc L\paren{\mb z, \mb w, \mb \lambda}$ if and only if
\begin{align}
\partial_{\mb z} f = \partial_{\mb w} f = \mb 0, \quad \mb z = \mb w, \quad \mb \lambda = \mb 0. 
\end{align}
\end{lemma}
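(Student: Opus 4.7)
The plan is to start from the definition of a critical point of $\mathcal{L}$: a triple is critical precisely when the three partial gradients $\partial_{\mb z}\mathcal{L}$, $\partial_{\mb w}\mathcal{L}$, and $\partial_{\mb \lambda}\mathcal{L}$ all vanish. Differentiating the augmented Lagrangian in \cref{eqn:Lagrangian} yields the system
\begin{align*}
\partial_{\mb z} f + \mb\lambda + \rho(\mb z - \mb w) = \mb 0,\quad
\partial_{\mb w} f - \mb\lambda - \rho(\mb z - \mb w) = \mb 0,\quad
\mb z - \mb w = \mb 0.
\end{align*}
The third equation immediately gives the primal feasibility $\mb z = \mb w$. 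The reverse direction of the lemma is then trivial: if $\partial_{\mb z}f = \partial_{\mb w}f = \mb 0$, $\mb z = \mb w$, and $\mb\lambda = \mb 0$, then each of the three equations above is satisfied.

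For the forward direction, I would exploit the symmetry of $f$. Writing out $\partial_{\mb z} f$ and $\partial_{\mb w} f$ explicitly via the chain rule, one gets
\begin{align*}
\partial_{\mb z} f = -\frac{1}{2m}\sum_{k=1}^m \paren{y_k^2 - \mb a_k^T\mb z\,\mb a_k^T\mb w}\paren{\mb a_k^T\mb w}\mb a_k,
\end{align*}
and the analogous expression for $\partial_{\mb w} f$ with the roles of $\mb z$ and $\mb w$ swapped. On the diagonal $\mb z = \mb w$ these two expressions coincide, so $\partial_{\mb z} f = \partial_{\mb w} f$. Adding the first two gradient equations eliminates $\mb\lambda$ and $\rho(\mb z-\mb w)$ entirely and yields $\partial_{\mb z} f + \partial_{\mb w} f = \mb 0$; combined with $\partial_{\mb z} f = \partial_{\mb w} f$ this forces $\partial_{\mb z} f = \partial_{\mb w} f = \mb 0$. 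Plugging this back into either of the first two equations, together with $\mb z = \mb w$, gives $\mb\lambda = \mb 0$.

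There is essentially no obstacle here: the only conceptual step is recognizing that the symmetry of the biquadratic $f$ in its two block arguments forces the two partial gradients to agree on the feasible set $\{\mb z = \mb w\}$, which collapses the first-order system into the three asserted conditions. The lemma is the motivation for dropping $\mb\lambda$ from the ADMM iteration, so I would keep the write-up short and flag this symmetry observation as the one line doing real work.
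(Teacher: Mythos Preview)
Your proposal is correct and follows essentially the same route as the paper: set the three partial gradients of $\mathcal{L}$ to zero, use primal feasibility $\mb z=\mb w$ together with the symmetry $\partial_{\mb z} f=\partial_{\mb w} f$ on the diagonal, and then linearly combine the first two equations to isolate $\mb\lambda=\mb 0$ and $\partial_{\mb z} f=\partial_{\mb w} f=\mb 0$. The only cosmetic difference is the order in which you extract the conclusions (you add first and substitute back, the paper subtracts first); the key symmetry observation is identical.
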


Since \cref{lem:Lag-crit} shows that $\mb \lambda=0$ at the global minima, we are motivated to consider a modified version of the ADMM algorithm \eqref{alg:ADMM}, in which the dual variable $\mb \lambda$ is fixed to equal zero.  This is equivalent to the block coordinate descent, or BCD, algorithm applied to the objective function $f(\mb z, \mb w) + \rho/2 \cdot  \| \mb z - \mb w\|^2$.  This is given by

\begin{align} \label{alg:block-coord-desc}
\begin{split}
\mb z^{(k+1)} & = \mathop{\arg\min}_{\mb z} f(\mb z, \mb w^{(k)}) + \frac{\rho}{2} \| \mb z - \mb w^{(k)}\|^2, \\
\mb w^{(k+1)} & = \mathop{\arg\min}_{\mb w} f(\mb z^{(k+1)}, w) + \frac{\rho}{2} \| \mb z^{(k+1)} - \mb w\|^2. 
\end{split}
\end{align}

\section{Linear Convergence}

In the spirit of providing a preliminary result, we shall consider the expected objective, which is given by
\begin{align} \label{eqn:exp-val}
g(\mb z, \mb w) & \doteq \expect{f(\mb z, \mb w)} + \frac{\rho}{2}\| \mb z - \mb w\|^2\\
&=\frac{3}{2}\norm{\mb x}{}^4 + \paren{\mb w^T \mb z}^2 + \frac{1}{2}\norm{\mb z}{}^2 \norm{\mb w}{}^2 - 2\mb x^T \mb z\mb x^T \mb w - \norm{\mb x}{}^2 \mb w^T \mb z + \frac{\rho}{2} \| \mb z - \mb w\|^2.
\end{align}


Secondly, we shall assume that our initial point $\paren{\mb z^{(0)}, \mb w^{(0)} }$ lies within the set

\begin{align}
N_{\mb x} \doteq \set{\paren{\mb z, \mb w}: \norm{\mb z - \mb x}{} \le \frac{1}{8} \norm{\mb x}{} \; \text{and}\; \norm{\mb w - \mb x}{} \le \frac{1}{8} \norm{\mb x}{}}.  
\end{align}

With sufficiently many samples, i.e., $m$ large enough, it is easy to obtain such initialization via a spectral decomposition. We record such a result proved in~\cite{candes2015wf}; see also a recent refinement that requires less samples~\cite{mondelli2017fundamental}. 

\begin{lemma} \label{lem:spec-init}
Suppose that $m \ge C_0 n \log n$, and let $\mb x^{(0)}$ be the top eigenvector of $Y=\sum_{k=1}^m y_k \mb a_k \mb a_k^T$ normalized such that $\norm{\mb x^{(0)}}{}^2 = \sum_{k=1}^m y_k/\sum_{k=1}^m \norm{\mb a_k}{}^2$. Then, $\|\mb x^{(0)}-\mb x\| \leq \frac{1}{8}\|x\|$. \footnote{Strictly speaking, either $\|\mb x^{(0)} - \mb x\| \le \frac{1}{8}\norm{\mb x}{}$ or $\|\mb x^{(0)} + \mb x\| \le \frac{1}{8}\norm{\mb x}{}$. Since the sign is not recoverable, recovering either $\mb x$ or $-\mb x$ is fine. We assume without loss of the generality the closeness to $\mb x$. } Here $C_0$ is an absolute constant. 
\end{lemma}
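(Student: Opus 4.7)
The statement is a concentration-plus-perturbation result for a random matrix $Y/m$, so the plan is to (i) identify the population version of $Y$, (ii) show its top eigenvector is aligned with $\mb x$ with a clear spectral gap, (iii) bound the fluctuation $\|Y/m - \expect{Y/m}\|$ via matrix concentration, and (iv) translate this operator-norm bound into an error bound on the top eigenvector using a Davis--Kahan / $\sin\Theta$-type inequality. Finally I need to control the norm of $\mb x^{(0)}$ through the chosen normalization so that the claimed bound $\|\mb x^{(0)}-\mb x\| \le \tfrac{1}{8}\|\mb x\|$ holds (rather than merely a bound on the direction).

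\textbf{Step 1 (population matrix).} I compute
\begin{align*}
\Sigma \doteq \expect{y_k \mb a_k \mb a_k^T} = \expect{\abs{\mb a_k^T \mb x}\, \mb a_k \mb a_k^T}
\end{align*}
by rotating coordinates so that $\mb x = \|\mb x\|\, \mb e_1$ and integrating against the Gaussian density. The result is of the form $\Sigma = \alpha\, \|\mb x\|\, I + \beta\, \mb x\mb x^T/\|\mb x\|$ for explicit positive constants $\alpha,\beta$, so $\mb x/\|\mb x\|$ is the top eigenvector of $\Sigma$ with a spectral gap of order $\|\mb x\|$.

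\textbf{Step 2 (concentration).} To bound $\|Y/m - \Sigma\|$, each summand $y_k \mb a_k \mb a_k^T$ is a rank-one Gaussian-polynomial matrix and hence sub-exponential (not sub-Gaussian), which is exactly what forces the extra $\log n$ factor in the sample complexity. I would truncate $y_k \mb a_k \mb a_k^T$ at a threshold of order $\sqrt{\log n}\cdot\|\mb x\|$ in operator norm, handle the small tail deterministically, and apply the matrix Bernstein inequality to the truncated sum. This yields
\begin{align*}
\norm{Y/m - \Sigma}{} \le \eps \|\mb x\|
\end{align*}
with probability at least $1 - n^{-c}$ provided $m \ge C_0(\eps) n \log n$.

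\textbf{Step 3 (eigenvector perturbation and normalization).} Combining Step 2 with the explicit spectral gap of $\Sigma$ from Step 1, a Davis--Kahan bound gives $\|\hat{\mb u} - \mb x/\|\mb x\|\| \le C\eps$ for the unit top eigenvector $\hat{\mb u}$ of $Y/m$ (after fixing the sign ambiguity, which is the reason for the footnote in the statement). Separately, the normalization uses
\begin{align*}
\norm{\mb x^{(0)}}{}^2 = \frac{\sum_k y_k}{\sum_k \|\mb a_k\|^2},
\end{align*}
and both the numerator and denominator concentrate around their expectations ($\expect{y_k}=\sqrt{2/\pi}\,\|\mb x\|$ and $\expect{\|\mb a_k\|^2}=n$) by standard scalar Bernstein/$\chi^2$ concentration, so $\|\mb x^{(0)}\|^2 = \|\mb x\|^2(1+o(1))$. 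Writing $\mb x^{(0)} = \|\mb x^{(0)}\|\hat{\mb u}$ and expanding $\|\mb x^{(0)} - \mb x\|^2$ using the near-unit norm and the bound on $\hat{\mb u}$ yields the claim once $\eps$ (and hence $C_0$) is chosen small enough to absorb all constants into the target $1/8$.

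\textbf{Main obstacle.} The delicate step is Step 2: the summands are only sub-exponential, so one cannot apply a clean sub-Gaussian matrix Hoeffding bound, and the truncation level must be tuned carefully against the tails of $|\mb a_k^T\mb x|\cdot\|\mb a_k\|^2$ to avoid an extra $\log$ factor and to achieve $m = O(n\log n)$ rather than $O(n\log^2 n)$. The eigenvector perturbation and normalization steps are then routine given a clean operator-norm bound.
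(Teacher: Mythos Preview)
The paper does not prove this lemma at all: it is stated as a quotation of a result from \cite{candes2015wf} (the Wirtinger Flow paper), with a pointer to the refinement in \cite{mondelli2017fundamental}, and no argument is given in the appendix. So there is no ``paper's own proof'' to compare against; your plan is being measured against the cited reference, not against anything in this manuscript.

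That said, your outline is exactly the standard route taken in \cite{candes2015wf}: compute the population matrix (which has the form $c_1\|\mb x\| I + c_2 \mb x\mb x^T/\|\mb x\|$ and hence a top eigenvector along $\mb x$), control the deviation $\|Y/m-\Sigma\|$ by a truncation-plus-matrix-Bernstein argument to handle the sub-exponential tails, and finish with a Davis--Kahan bound together with scalar concentration for the norm estimate. Your identification of Step~2 as the only nontrivial part, and of the sub-exponential tails as the source of the extra $\log n$, is accurate.

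One caveat worth flagging: as written in the lemma, the normalization $\|\mb x^{(0)}\|^2=\sum_k y_k/\sum_k\|\mb a_k\|^2$ does \emph{not} concentrate around $\|\mb x\|^2$ under your (correct) computation $\expect{y_k}=\sqrt{2/\pi}\,\|\mb x\|$ and $\expect{\|\mb a_k\|^2}=n$; the ratio tends to $\sqrt{2/\pi}\,\|\mb x\|/n$, not $\|\mb x\|^2$. In the original Wirtinger Flow statement the initialization uses the squared measurements and an extra factor of $n$ in the normalization, so the discrepancy is a transcription issue in the lemma statement here rather than a flaw in your Step~3. Your claim ``$\|\mb x^{(0)}\|^2=\|\mb x\|^2(1+o(1))$'' therefore does not follow from the formula as stated; if you carry out the argument, use the normalization from \cite{candes2015wf} directly.
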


We next list several lemmas that are essential to obtaining our main local convergence result. All the proofs are deferred to the appendix. The next lemma says in the set $N_{\mb x}$, the objective $g$ is jointly strongly convex in $\paren{\mb z, \mb w}$. 
\begin{lemma} \label{lem:strong-cvx}
Suppose $\rho \ge \norm{\mb x}{}^2$. For all points $(\mb z, \mb w) \in N_{\mb x}$, 
\begin{align}
\nabla^2 g(\mb z, \mb w) \succeq \frac{1}{3} \|{\mb x}\|^2 \mb I. 
\end{align}
\end{lemma}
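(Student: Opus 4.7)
The plan is to reduce positive-definiteness of $\nabla^2 g$ across $N_{\mb x}$ to two ingredients: an exact algebraic evaluation at the anchor $(\mb x, \mb x)$, plus a uniform perturbation estimate over the neighborhood. First I would differentiate $g$ twice to write $\nabla^2 g$ as a $2n \times 2n$ block matrix with diagonal blocks $2\mb w\mb w^T + (\|\mb w\|^2 + \rho)\mb I$ and $2\mb z\mb z^T + (\|\mb z\|^2 + \rho)\mb I$, and off-diagonal block $C = 2(\mb w\mb z^T + \mb z\mb w^T) + (2\mb w^T\mb z - \|\mb x\|^2 - \rho)\mb I - 2\mb x\mb x^T$. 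Expanding $(\mb u, \mb v)^T \nabla^2 g (\mb u, \mb v)$ against a test vector and regrouping the pure-square and $\rho$-proportional pieces should produce the decomposition
\[
(\mb u, \mb v)^T \nabla^2 g(\mb z, \mb w)(\mb u, \mb v) = 2(\mb w^T\mb u + \mb z^T\mb v)^2 + \rho\|\mb u - \mb v\|^2 + T(\mb z, \mb w; \mb u, \mb v),
\]
where $T$ collects the remaining mixed scalar terms.

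At the anchor $\mb z = \mb w = \mb x$, direct simplification should collapse $T$ to $\|\mb x\|^2\|\mb u + \mb v\|^2$. Dropping the nonnegative square, using $\rho \ge \|\mb x\|^2$, and invoking the parallelogram identity $\|\mb u+\mb v\|^2 + \|\mb u-\mb v\|^2 = 2(\|\mb u\|^2+\|\mb v\|^2)$ then yields $\nabla^2 g(\mb x,\mb x) \succeq 2\|\mb x\|^2\mb I$, comfortably above the target floor $\tfrac{1}{3}\|\mb x\|^2 \mb I$.

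For general $(\mb z, \mb w) \in N_{\mb x}$, I would write $\mb z = \mb x + \mb\delta_{\mb z}$ and $\mb w = \mb x + \mb\delta_{\mb w}$ with $\|\mb\delta_{\mb z}\|, \|\mb\delta_{\mb w}\| \le \|\mb x\|/8$, so that $\|\mb z\|, \|\mb w\| \le \tfrac{9}{8}\|\mb x\|$. Each constituent difference in $T - \|\mb x\|^2\|\mb u + \mb v\|^2$, namely $\|\mb w\|^2 - \|\mb x\|^2$, $\|\mb z\|^2 - \|\mb x\|^2$, $\mb w^T\mb z - \|\mb x\|^2$, and $(\mb u^T\mb z)(\mb w^T\mb v) - (\mb u^T\mb x)(\mb x^T\mb v)$, telescopes into a product involving one small factor of size at most $\|\mb x\|/8$ and one factor of size at most $\tfrac{17}{8}\|\mb x\|$. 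Applying Cauchy-Schwarz and AM-GM throughout should produce $|T - \|\mb x\|^2\|\mb u + \mb v\|^2| \le c\|\mb x\|^2(\|\mb u\|^2 + \|\mb v\|^2)$ with an explicit constant $c$ comfortably below $5/3$. Combined with the anchor bound and the parallelogram identity applied to $\rho\|\mb u - \mb v\|^2 + \|\mb x\|^2\|\mb u + \mb v\|^2$, this delivers $(\mb u,\mb v)^T \nabla^2 g(\mb z,\mb w)(\mb u,\mb v) \ge (2 - c)\|\mb x\|^2(\|\mb u\|^2 + \|\mb v\|^2) \ge \tfrac{1}{3}\|\mb x\|^2(\|\mb u\|^2+\|\mb v\|^2)$, which is the claim.

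The main obstacle is keeping the perturbation bound tight enough. A naive route through operator norms on the full block-Hessian difference $\nabla^2 g(\mb z,\mb w) - \nabla^2 g(\mb x,\mb x)$—for instance via the generic estimate $\|M\|_{\text{op}} \le \max(\|A\|_{\text{op}}, \|B\|_{\text{op}}) + \|C\|_{\text{op}}$ applied block by block—actually overshoots the $\tfrac{5}{3}\|\mb x\|^2$ slack budget left over after the anchor analysis. The fix is to preserve the $(\mb u+\mb v, \mb u-\mb v)$ decomposition surfaced at the anchor and only perturb the scalar residual $T$, which keeps the beneficial $\rho\|\mb u-\mb v\|^2 + \|\mb x\|^2\|\mb u+\mb v\|^2$ floor intact across $N_{\mb x}$.
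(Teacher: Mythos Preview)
Your proposal is correct and follows essentially the same route as the paper: compute the Hessian quadratic form, write $\mb z = \mb x + \mb\delta_{\mb z}$, $\mb w = \mb x + \mb\delta_{\mb w}$, isolate the dominant piece $\rho\|\mb u-\mb v\|^2 + \|\mb x\|^2\|\mb u+\mb v\|^2$ (the paper carries a global factor $\tfrac12$ here), bound the $\mb\delta$-residuals via Cauchy--Schwarz and the $\tfrac18\|\mb x\|$ radius, and finish with the parallelogram identity. Your ``anchor plus perturbation'' framing is a slightly cleaner presentation of the same computation the paper performs inline.
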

The next no-escape lemma ensures that for convergence analysis, we only have to deal with the set $N_{\mb x}$. 
\begin{lemma} \label{lem:no-escape}
Suppose $\rho \ge \frac{27}{8} \norm{\mb x}{}^2$. Then the BCD iterate sequence $\set{\paren{\mb z^{(k)}, \mb w^{(k}}}$ stays in $N_{\mb x}$. 
\end{lemma}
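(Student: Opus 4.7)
The plan is to reduce the claim, by induction on $k$ and by the $(\mathbf{z},\mathbf{w})$-symmetry of $g$, to a single one-step estimate: whenever $\mathbf{w} \in \overline{B}(\mathbf{x}, \tfrac{1}{8}\|\mathbf{x}\|)$, the block minimizer $\mathbf{z}^+(\mathbf{w}) := \arg\min_{\mathbf{z}} g(\mathbf{z}, \mathbf{w})$ also lies in $\overline{B}(\mathbf{x}, \tfrac{1}{8}\|\mathbf{x}\|)$. Since the $\mathbf{w}$-subproblem has identical form with $\mathbf{z}^{(k+1)}$ replacing $\mathbf{w}$, iterating this one-step claim along the alternating updates gives the full no-escape statement starting from $(\mathbf{z}^{(0)}, \mathbf{w}^{(0)}) \in N_{\mathbf{x}}$.

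For the one-step estimate, first-order optimality $\nabla_{\mathbf{z}} g(\mathbf{z}^+, \mathbf{w}) = 0$ is just a linear system and yields the closed form $\mathbf{z}^+ - \mathbf{x} = H^{-1}\bigl[(\rho - \|\mathbf{x}\|^2 - 2u)\mathbf{e} - (2u + q)\mathbf{x}\bigr]$, where $\mathbf{e} := \mathbf{w} - \mathbf{x}$, $u := \mathbf{x}^T\mathbf{e}$, $q := \|\mathbf{e}\|^2$, and $H := 2\mathbf{w}\mathbf{w}^T + (\|\mathbf{w}\|^2 + \rho)\mathbf{I}$. I would then invoke the exact eigendecomposition of $H$ --- eigenvalues $3\|\mathbf{w}\|^2 + \rho$ along $\hat{\mathbf{w}}$ and $\|\mathbf{w}\|^2 + \rho$ in every orthogonal direction --- and split the bracketed vector into its $\hat{\mathbf{w}}$-parallel and $\hat{\mathbf{w}}$-perpendicular parts. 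This yields an exact formula for $\|\mathbf{z}^+ - \mathbf{x}\|^2$ as a rational function of $u$, $q$, $\|\mathbf{x}\|^2$ and $\rho$. Constraining $(u, q)$ by $|u| \le \|\mathbf{x}\|\sqrt{q}$ and $q \le \|\mathbf{x}\|^2/64$, the inequality $\|\mathbf{z}^+ - \mathbf{x}\|^2 \le \|\mathbf{x}\|^2/64$ reduces to a two-variable polynomial estimate. The value $\rho = \tfrac{27}{8}\|\mathbf{x}\|^2$ is calibrated to the extreme case $\mathbf{w} = (1+\alpha)\mathbf{x}$: a direct substitution gives $\nabla_{\mathbf{z}} g(\mathbf{x}, \mathbf{w}) = \alpha\bigl[3(1+\alpha)\|\mathbf{x}\|^2 - \rho\bigr]\mathbf{x}$, which vanishes at the boundary $\alpha = \tfrac{1}{8}$ precisely when $\rho = \tfrac{27}{8}\|\mathbf{x}\|^2$, so the ``hardest'' boundary direction $\mathbf{w} = \tfrac{9}{8}\mathbf{x}$ is mapped back exactly to $\mathbf{x}$.

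The main obstacle will be that the naive spectral bound $\|\mathbf{z}^+ - \mathbf{x}\| \le \|\nabla_{\mathbf{z}} g(\mathbf{x}, \mathbf{w})\|/(\|\mathbf{w}\|^2 + \rho)$ --- collapsing $H^{-1}$ to its smallest eigenvalue --- is too loose: at the critical point $\mathbf{w} = \tfrac{9}{8}\mathbf{x}$ this coarse bound gives roughly $\tfrac{38}{265}\|\mathbf{x}\|$, which exceeds $\tfrac{1}{8}\|\mathbf{x}\|$. The saving feature is that in this hard direction the gradient is itself aligned with $\mathbf{w}$, so it is damped by the larger eigenvalue $3\|\mathbf{w}\|^2 + \rho$ of $H$; retaining both eigenvalues separately in the $\|\mathbf{z}^+-\mathbf{x}\|^2$ identity is therefore essential. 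Once this structural observation is in place, the remaining polynomial optimization over $(u, q)$ in the admissible region is elementary but bookkeeping-heavy; the constant $\tfrac{27}{8}$ appears as the sharp threshold rather than anything particularly delicate.
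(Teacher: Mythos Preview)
Your plan is essentially the same as the paper's: reduce by symmetry and induction to a one-step claim, solve the linear first-order condition for $\mathbf{z}^{+}$, eigendecompose $H = 2\mathbf{w}\mathbf{w}^T + (\|\mathbf{w}\|^2+\rho)\mathbf{I}$, and reduce $\|\mathbf{z}^{+}-\mathbf{x}\|\le \tfrac{1}{8}\|\mathbf{x}\|$ to a two-parameter rational inequality. You correctly identify the crucial structural point that the coarse bound using only the smaller eigenvalue fails at the extremal $\mathbf{w}=\tfrac{9}{8}\mathbf{x}$, and that the alignment of the gradient with $\mathbf{w}$ there is what rescues the estimate.

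The differences from the paper are cosmetic in one place and substantive in another. Cosmetically, the paper parametrizes by $(\alpha,\beta)$ with $\|\mathbf{w}\|=\beta\|\mathbf{x}\|$ and $\mathbf{w}^T\mathbf{x}=\alpha\beta\|\mathbf{x}\|^2$, whereas you use $(u,q)$; these are equivalent reparametrizations of the same two geometric degrees of freedom. Substantively, the paper does \emph{not} attack the final two-variable inequality head-on. Instead it writes $\rho=C\|\mathbf{x}\|^2$, computes $\partial_C h(\alpha,\beta,C)$ for $h := (2\langle\mathbf{z}^{+},\mathbf{x}\rangle - \|\mathbf{z}^{+}\|^2)/\|\mathbf{x}\|^2$, and shows $\partial_C h\le 0$ once $C$ exceeds an explicit rational function of $(\alpha,\beta)$. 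Since $h\to 2\alpha\beta-\beta^2\ge 63/64$ as $C\to\infty$, this monotonicity reduces everything to bounding that single rational threshold; its maximum over the admissible region is attained at $\alpha=1$ and equals $3\beta\le 27/8$. This is where the constant comes from cleanly. Your plan to do the polynomial optimization directly at $\rho=\tfrac{27}{8}\|\mathbf{x}\|^2$ should work in principle, but the paper's monotonicity-in-$\rho$ device is what keeps the bookkeeping tractable and makes the sharpness of $27/8$ transparent rather than an artifact of brute-force case analysis.
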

The next gradient Lipschitz result is essential to deriving a concrete convergence rate, similar to most other convergence proofs. 
\begin{lemma} \label{lem:block-Lipschitz}
$g(\mb z, \mb w)$ is block Lipschitz on $N_{\mb x}$.  More specifically, for all $(\mb z, \mb w) \in N_{\mb x}$ and all $\mb h_{\mb z}, \mb h_{\mb w} \in \mathbb{R}^n$,
\begin{align} \label{eqn:block-Lipschitz}
\|\nabla_{\mb z}g(\mb z + \mb h_{\mb z}, \mb w)-\nabla_{\mb z}g(\mb z, \mb w)\| & \leq L_{\mb z}\|\mb h_{\mb z}\|, \\
\|\nabla_{\mb w}g(\mb z, \mb w+\mb h_{ \mb w})-\nabla_{\mb z}g(\mb z, \mb w)\| & \leq L_{\mb w} \|\mb h_{\mb w}\|,
\end{align}
where $L_{\mb z}=L_{\mb w}=4\| \mb x\|^2 + \rho$.
\end{lemma}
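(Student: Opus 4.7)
The plan is to prove the block Lipschitz bounds by directly computing the partial Hessians of $g$ and showing they have a particularly nice structure, then bounding their operator norm on $N_{\mb x}$.

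First, I would compute the partial gradient
\begin{align*}
\nabla_{\mb z} g(\mb z, \mb w) = 2(\mb w^T \mb z)\mb w + \|\mb w\|^2 \mb z - 2(\mb x^T \mb w)\mb x - \|\mb x\|^2 \mb w + \rho(\mb z - \mb w),
\end{align*}
from which the key observation follows: the partial Hessian with respect to $\mb z$ is
\begin{align*}
\nabla^2_{\mb z \mb z} g(\mb z, \mb w) = 2 \mb w \mb w^T + \|\mb w\|^2 \mb I + \rho \mb I,
\end{align*}
which does \emph{not} depend on $\mb z$ at all. Consequently, for any $\mb h_{\mb z} \in \mathbb{R}^n$, the difference $\nabla_{\mb z} g(\mb z + \mb h_{\mb z}, \mb w) - \nabla_{\mb z} g(\mb z, \mb w)$ is exactly $(2 \mb w \mb w^T + \|\mb w\|^2 \mb I + \rho \mb I) \mb h_{\mb z}$, so the Lipschitz constant in the $\mb z$ block is just the operator norm of this partial Hessian evaluated at $(\mb z, \mb w) \in N_{\mb x}$.

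Next I would bound this operator norm by the triangle inequality, giving
\begin{align*}
\|\nabla^2_{\mb z \mb z} g(\mb z, \mb w)\| \leq 2\|\mb w\|^2 + \|\mb w\|^2 + \rho = 3\|\mb w\|^2 + \rho.
\end{align*}
Since $(\mb z, \mb w) \in N_{\mb x}$ implies $\|\mb w\| \leq \|\mb x\| + \tfrac{1}{8}\|\mb x\| = \tfrac{9}{8}\|\mb x\|$, we obtain
\begin{align*}
3\|\mb w\|^2 + \rho \leq 3 \cdot \tfrac{81}{64}\|\mb x\|^2 + \rho = \tfrac{243}{64}\|\mb x\|^2 + \rho \leq 4\|\mb x\|^2 + \rho,
\end{align*}
which matches the claimed $L_{\mb z}$. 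The bound for the $\mb w$ block follows by a symmetric argument, since $g$ is symmetric in $(\mb z, \mb w)$ (both quadratic in each argument with the same structure), so $\nabla^2_{\mb w \mb w} g(\mb z, \mb w) = 2 \mb z \mb z^T + \|\mb z\|^2 \mb I + \rho \mb I$, depending only on $\mb z$, and the same estimate using $\|\mb z\| \leq \tfrac{9}{8}\|\mb x\|$ gives $L_{\mb w} = 4\|\mb x\|^2 + \rho$.

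There is no serious obstacle here: the only subtlety worth stating clearly is that the partial Hessian in one block is independent of that block's variable, which is why the Lipschitz bound holds for \emph{all} $\mb h_{\mb z}, \mb h_{\mb w}$ (not just perturbations keeping us inside $N_{\mb x}$), and why the base point $(\mb z, \mb w) \in N_{\mb x}$ is the only place the neighborhood restriction enters.
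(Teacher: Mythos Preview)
Your proof is correct and essentially identical to the paper's: both compute the partial gradient, observe that the gradient difference in $\mb z$ is $(2\mb w\mb w^T + \|\mb w\|^2\mb I + \rho\mb I)\mb h_{\mb z}$, bound its norm by $(3\|\mb w\|^2+\rho)\|\mb h_{\mb z}\|$, and then use $\|\mb w\|\le \tfrac{9}{8}\|\mb x\|$ to reach $4\|\mb x\|^2+\rho$. Your explicit remark that the partial Hessian is independent of $\mb z$ (and hence the bound holds for all $\mb h_{\mb z}$, with $N_{\mb x}$ entering only through the base point) is a nice clarification that the paper leaves implicit.
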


It then follows from the Taylor theorem that (see, e.g.,~\cite{bertsekas1999nonlinear}, Proposition A.24), that
\begin{align} \label{eqn:block-descent-lemma}
g(\mb z+ \mb h_{\mb z}, \mb w) \le g(\mb z, \mb w) + \nabla_{\mb z}^T g(\mb z, \mb w) \mb h_{\mb z}+\frac{(4\| \mb x\|^2 + \rho)}{2}\| \mb h_{\mb z}\|^2, \\
g(\mb z, \mb w + \mb h_{\mb w}) \le g(\mb z, \mb w) + \nabla_{\mb w}^T  g(\mb z, \mb w)\mb h_{\mb w}+\frac{(4\| \mb x\|^2 + \rho)}{2}\| \mb h_{\mb w}\|^2, 
\end{align}
whenever $\paren{\mb z, \mb w}$ lie in $N_{\mb x}$.

We now state our main result.

\begin{theorem} \label{thm:lin-conv}
Consider $g(\mb z, \mb w)$ as given by \cref{eqn:exp-val}, and assume $\rho \ge \frac{27}{8}\norm{\mb x}{}^2$ and $(\mb z^{(0)}, \mb w^{(0)}) \in N_{\mb x}$.  Then, the block coordinate descent algorithm listed in~\cref{alg:block-coord-desc} converges linearly to the point $(\mb x, \mb x)$. Specifically, 
\begin{align}
\norm{\paren{\mb z^{(k)}, \mb w^{k}} - \paren{\mb x, \mb x}}{} \le \paren{1-\frac{\norm{\mb x}{}^2}{12\norm{\mb x}{}^2 + 3\rho}}^{k/2} \sqrt{\frac{6}{\norm{\mb x}{}} \brac{g\paren{\mb z^{(0)}, \mb w^{(0)}} - g\paren{\mb x, \mb x}}}. 
\end{align}
\end{theorem}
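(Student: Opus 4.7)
The plan is to run the standard exact-minimization block coordinate descent linear convergence argument, leveraging the three lemmas already at hand: strong convexity on $N_{\mathbf{x}}$ (Lemma~\ref{lem:strong-cvx}), the block descent bound~\eqref{eqn:block-descent-lemma}, and the no-escape property (Lemma~\ref{lem:no-escape}). The pipeline is two-stage: first show that one full BCD sweep contracts the objective gap $g - g^\ast$ by the factor $1 - \mu/L$ with $\mu = \|\mathbf{x}\|^2/3$ and $L = 4\|\mathbf{x}\|^2 + \rho$; then convert this objective-gap bound into an iterate-distance bound via the quadratic-growth consequence of strong convexity. Since $(\mathbf{x},\mathbf{x})$ is a zero of every summand defining $f$ and also a stationary point of $g$ (easily verified from the explicit gradient), strong convexity on $N_{\mathbf{x}}$ identifies it as the unique minimizer in the neighborhood, so $g^\ast = g(\mathbf{x},\mathbf{x})$.

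For the per-sweep contraction, since $\mathbf{z}^{(k+1)}$ exactly minimizes $g(\cdot,\mathbf{w}^{(k)})$, an exact block step dominates a single gradient step with stepsize $1/L$, and the descent bound gives
\begin{align*}
g(\mathbf{z}^{(k+1)}, \mathbf{w}^{(k)}) \le g(\mathbf{z}^{(k)}, \mathbf{w}^{(k)}) - \tfrac{1}{2L}\|\nabla_{\mathbf{z}} g(\mathbf{z}^{(k)}, \mathbf{w}^{(k)})\|^2,
\end{align*}
and analogously for the $\mathbf{w}$-step. The key trick is the first-order optimality $\nabla_{\mathbf{z}} g(\mathbf{z}^{(k+1)}, \mathbf{w}^{(k)}) = \mathbf{0}$, which implies that at the intermediate point $(\mathbf{z}^{(k+1)}, \mathbf{w}^{(k)})$ the full gradient coincides with the $\mathbf{w}$-block gradient. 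The Polyak--\L{}ojasiewicz inequality implied by strong convexity (Lemma~\ref{lem:strong-cvx}) then yields
\begin{align*}
g(\mathbf{z}^{(k+1)}, \mathbf{w}^{(k)}) - g^\ast \le \tfrac{1}{2\mu}\|\nabla_{\mathbf{w}} g(\mathbf{z}^{(k+1)}, \mathbf{w}^{(k)})\|^2.
\end{align*}
Combining the two displays with the $\mathbf{w}$-block descent bound and discarding the nonnegative gradient term produced by the $\mathbf{z}$-step gives the one-step contraction $g(\mathbf{z}^{(k+1)}, \mathbf{w}^{(k+1)}) - g^\ast \le (1 - \mu/L)\,(g(\mathbf{z}^{(k)}, \mathbf{w}^{(k)}) - g^\ast)$, and iterating yields $g(\mathbf{z}^{(k)}, \mathbf{w}^{(k)}) - g^\ast \le (1 - \mu/L)^{k}(g(\mathbf{z}^{(0)}, \mathbf{w}^{(0)}) - g^\ast)$, with $\mu/L = \|\mathbf{x}\|^2/(12\|\mathbf{x}\|^2 + 3\rho)$ as claimed.

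Finally, strong convexity gives the quadratic growth bound $g(\mathbf{z}, \mathbf{w}) - g^\ast \ge \tfrac{\mu}{2}\|(\mathbf{z}, \mathbf{w}) - (\mathbf{x}, \mathbf{x})\|^2$ on $N_{\mathbf{x}}$; taking square roots and substituting the contraction yields the stated bound with the $(1-\mu/L)^{k/2}$ factor and prefactor $\sqrt{2/\mu} = \sqrt{6/\|\mathbf{x}\|^2}$. The main obstacle I anticipate is bookkeeping: the descent, strong-convexity, and PL steps must each be invoked at intermediate points $(\mathbf{z}^{(k+1)}, \mathbf{w}^{(k)})$, not only at the full iterates, so the no-escape argument has to be read as guaranteeing that each block update keeps the updated block within $\|\cdot - \mathbf{x}\| \le \|\mathbf{x}\|/8$ (so all half-step pairs lie in $N_{\mathbf{x}}$); if Lemma~\ref{lem:no-escape} is proved blockwise by induction this is automatic. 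No further ideas beyond the three lemmas should be needed.
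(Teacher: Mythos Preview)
Your proposal is correct and follows essentially the same route as the paper: block descent from exact minimization plus Lipschitzness, the observation that the previous block's optimality condition kills one block gradient so the remaining block gradient equals the full gradient, the PL inequality from strong convexity, and finally quadratic growth to pass to iterate distance. The only cosmetic difference is that you apply the PL step at the half-iterate $(\mb z^{(k+1)},\mb w^{(k)})$ using $\nabla_{\mb z} g=\mb 0$, whereas the paper applies it at the full iterate $(\mb z^{(k)},\mb w^{(k)})$ using $\nabla_{\mb w} g=\mb 0$; your variant has the mild advantage of working verbatim at $k=0$, and your observation that Lemma~\ref{lem:no-escape} is proved blockwise (so half-iterates also lie in $N_{\mb x}$) is exactly what is needed.
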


Convergence of BCD method on strongly convex function is well known. We adapt a proof appearing in~\cite{beck2013blockcd}.

\begin{proof}[Proof of Theorem~\ref{thm:lin-conv}]
\begin{align} \label{eqn:monot-eqn}
g(\mb z^{(k)}, \mb w^{(k)}) - g(\mb z^{(k+1)}, \mb w^{(k)}) &\ge  g(\mb z^{(k)}, \mb w^{(k)})- g\paren{\mb z^{(k)} - \frac{1}{L_{\mb z}}\nabla_{\mb z} g\paren{\mb z^{(k)}, \mb w^{(k)}}, \mb w^{(k)}} \nonumber \\
&\ge \frac{1}{2L_{\mb z}}\|\nabla_{\mb z}g(\mb z^{(k)}, \mb w^{(k)})\|^2 \nonumber \\
& = \frac{1}{2L_{\mb z}}\|\nabla g(\mb z^{(k)}, \mb w^{(k)})\|^2, 
\end{align}
where the first line follows as $\mb z^{(k+1)}$ minimizes $g(\mb z,\mb w^{(k)})$, the second line follows from~\cref{eqn:block-descent-lemma}, and the last line follows as $\nabla_{\mb w} g\paren{\mb z^{(k)}, \mb w^{(k)}} = \mb 0$.  By \cref{lem:strong-cvx}, $g$ is strongly convex, and hence
\begin{align*} 
g(\mb z_2, \mb w_2) \ge g(\mb z_1, \mb w_1) + \langle \nabla g(\mb z_1, \mb w_1), (\mb z_2-\mb z_1, \mb w_2-\mb w_1) \rangle+ \frac{\sigma}{2}\|(\mb z_2-\mb z_1, \mb w_2-\mb w_1)\|^2,
\end{align*}
for all $(\mb z_1, \mb w_1), (\mb z_2, \mb w_2) \in N_{\mb x}$, where $\sigma=\frac{1}{3} \|{\mb x}\|^2$.  Minimizing both sides w.r.t. $(\mb z_2, \mb w_2)$, we have 
\begin{align} \label{eqn:global-strong-cvx-eqn}
g(\mb z_1, \mb w_1)-g(\mb x, \mb x) \le \frac{1}{2\sigma}\|\nabla g(\mb z_1, \mb w_1)\|^2,
\end{align}
for all $(\mb z_1, \mb w_1) \in N_{\mb x}$.  It then follows from \cref{eqn:monot-eqn,eqn:global-strong-cvx-eqn} that
\begin{align}
g(\mb z^{(k)}, \mb w^{(k)})-g(\mb x,\mb x) & \le \frac{1}{2 \sigma}
\| \nabla g(\mb z^{(k)}, \mb w^{(k)})\|^2 \\
& \le \frac{L_{\mb z}}{\sigma}[g(\mb z^{(k)}, \mb w^{(k)})-g(\mb z^{(k+1)}, \mb w^{(k+1)})] \\
&=\frac{L_{\mb z}}{\sigma}[(g(\mb z^{(k)}, \mb w^{(k)})-g(\mb x, \mb x))-(g(\mb z^{(k+1)}, \mb w^{(k+1)})-g(\mb x, \mb x))].
\end{align}
Rearranging this last equation and applying it recursively then gives that
\begin{align}
g(\mb z^{(k)}, \mb w^{(k)})-g(\mb x, \mb x) \le \paren{1-\frac{\sigma}{L_{\mb z}}}^k\paren{g(\mb z^{(0)}, \mb w^{(0)})-g(\mb x, \mb x)}. 
\end{align}
An analogous statement obviously also holds for the $\mb w$ sequence. Invoking strong convexity again, we have 
\begin{align}
\frac{\sigma}{2}\|(\mb z^{(k)}, \mb w^{(k)})-(\mb x, \mb x)\|^2 \le g(\mb z^{(k)}, \mb w^{(k)})-g(\mb x, \mb x).
\end{align}
Hence,
\begin{align}
\|(\mb z^{(k)}, \mb w^{(k)})-(\mb x, \mb x)\| \le \paren{1-\frac{\sigma}{L_{\mb z}}}^{k/2}\sqrt{\frac{2}{\sigma}\paren{g(\mb z^{(0)}, \mb w^{(0)})-g(\mb x, \mb x)}}. 
\end{align}
Substituting $\sigma = \frac{1}{3} \norm{\mb x}{}^2$ and $L_{\mb z} = 4\norm{\mb x}{}^2 + \rho$ completes the proof.
\end{proof}

\section{Conclusion and future work}
We have shown that the Gaussian phase retrieval problem can be formulated as a biconvex optimization problem, and that ADMM applied to this problem has a dual variable $\mb \lambda$ equal to zero at all critical points.  This motivated a convergence analysis of the block coordinate descent algorithm, which is equivalent to ADMM when the dual variable $\mb \lambda$ is fixed to be zero.  We established that the block coordinate descent algorithm converges to a global minimizer of the expected objective at a linear rate locally, when starting from a ``close'' initial point---such a ``close'' point can always be found using a spectral method. 

One can expect to show a similar result for the finite-sample objective using a concentration argument. On our specific nonconvex problem, ADMM with dual fixed as zero is equivalent to the BCD method. In general, they are not. Both methods can be notably competitive in performance when solving certain classes of structural large-scale nonconvex optimization problems. Theoretical understanding of their behaviors is largely open.


\section{Acknowledgments}

The authors would like to foremostly thank their research advisor, Professor Emmanuel J. Cand\`{e}s, for introducing and guiding this research.  D.B. is also very grateful to Professor Walter Murray and Professor Gordon Wetzstein for many helpful discussions, etc.

\bibliographystyle{abbrvnat}
\bibliography{nips_bib}

\section{Appendix: Proofs of lemmas}

\begin{proof}[Proof of \cref{lem:Lag-crit}]
The ``if'' part is simple. We next show the ``only if'' part. Taking partial derivatives of $\mc L$ and setting them to be zero, we obtain 
\begin{align}
\partial_{\mb z} f(\mb z, \mb w) + \mb \lambda + \rho (\mb z - \mb w) & = \mb  0, \\
\partial_{\mb w} f(\mb z, \mb w) - \mb \lambda - \rho (\mb z - \mb w) & = \mb  0, \\
\mb z & = \mb w.
\end{align}
Since
\begin{align}
\partial_{\mb z} f(\mb z, \mb w)&=\frac{1}{2m}\sum \limits_{k=1}^m \mb a_k^T \mb w (\mb a_k^T \mb z \mb a_k^T \mb w-y_k^2) \mb a_k, \\
\partial_{\mb w} f(\mb z, \mb w)&=\frac{1}{2m}\sum \limits_{k=1}^m \mb a_k^T \mb z (\mb a_k^T \mb z \mb a_k^T \mb w-y_k^2) \mb a_k,
\end{align}
we have the equality $\partial_{\mb z} f(\mb z, \mb w)=\partial_{\mb w} f(\mb z, \mb w)$ when $\mb z = \mb w$.  Subtracting the second equation from the first then gives $2 \mb \lambda = \mb 0$.  Adding the first two equations gives $\partial_{\mb z} f(\mb z, \mb w)=\partial_{\mb w} f(\mb z, \mb w)= \mb 0$.
%
\end{proof}

\begin{proof}[Proof of \cref{lem:block-Lipschitz}]
By a direct computation,
\begin{align} \label{eqn:gradient}
\nabla_{\mb z}g(\mb z, \mb w) & = 2\mb w^T \mb z \mb w + \norm{\mb w}{}^2 \mb z - 2\mb x^T \mb w \mb x - \norm{\mb x}{}^2 \mb w + \rho\paren{\mb z - \mb w}, \\
\nabla_{\mb w}g(\mb z, \mb w) & = 2\mb w^T \mb z \mb z + \norm{\mb z}{}^2 \mb w - 2\mb x^T \mb z\mb x - \norm{\mb x}{}^2 \mb z - \rho\paren{\mb z - \mb w}. 
\end{align}
Hence, 
\begin{align}
\|\nabla_{\mb z}g(\mb z + \mb h_{\mb z}, \mb w)-\nabla_{\mb z}g(\mb z, \mb w)\| 
&= \|2\mb w^T \mb h \mb w + \| \mb w \|^2 \mb h + \rho \mb h\| \\
& \le \paren{2\norm{\mb w \mb w^T}{} + \norm{\mb w}{}^2 + \rho} \norm{\mb h}{}\\
&= (3\| \mb w \|^2 + \rho) \| \mb h\| \\
& \le (4\|\mb x \|^2+\rho) \| \mb h \|,
\end{align}
where in the last inequality we have used the fact that $\|\mb w - \mb x\| \le \frac{1}{8}\| \mb x \|$, and hence $\| \mb w \| \le \frac{9}{8}\| \mb x \|$.  An entirely analogous argument provides the block Lipschitz constant $L_{\mb w}$.
\end{proof}

\begin{proof}[Proof of Lemma~\ref{lem:strong-cvx}]
A direct computation shows that the Hessian quadratic form is 
\begin{multline}
\begin{bmatrix}
\mb h_{\mb z} \\ \mb h_{\mb w}
\end{bmatrix}^T 
\nabla^2 \expect{g} 
\begin{bmatrix}
\mb h_{\mb z} \\ \mb h_{\mb w}
\end{bmatrix} = \paren{\mb w^T \mb h_{\mb z} + \mb z^T \mb h_{\mb w}}^2 + 2\mb w^T \mb z \mb h_{\mb w}^T \mb h_{\mb z} + \frac{1}{2} \norm{\mb z}{}^2 \norm{\mb h_{\mb w}}{}^2 + \frac{1}{2}   \norm{\mb w}{}^2 \norm{\mb h_{\mb z}}{}^2 \\
+ 2 \mb z^T \mb h_{\mb z} \mb w^T \mb h_{\mb w} - 2 \mb x^T \mb h_{\mb z} \mb x^T \mb h_{\mb w} - \norm{\mb x}{}^2 \mb h_{\mb w}^T \mb h_{\mb z} + \frac{\rho}{2} \norm{\mb h_{\mb z} - \mb h_{\mb w}}{}^2. 
\end{multline}
Write $\mb z = \mb x + \mb \delta_{\mb z}$ and $\mb w = \mb x + \mb \delta_{\mb w}$. Then, $\norm{\mb \delta_{\mb z}}{} \le \frac{1}{8} \norm{\mb x}{}$ and $\norm{\mb \delta_{\mb w}}{} \le \frac{1}{8} \norm{\mb x}{}$ by our assumption. For any $\mb v = [\mb h_{\mb z}; \mb h_{\mb w}]$ with $\norm{\mb v}{} = 1$, 
\begin{align}
& \mb v^T \nabla \expect{g\paren{\mb z, \mb w}} \mb v \nonumber \\
=\; & \begin{bmatrix} 
\mb h_{\mb z} \\ \mb h_{\mb w}
\end{bmatrix}^T 
\nabla^2 \expect{g(\mb x+ \mb \delta_{\mb z}, \mb x+ \mb \delta_{\mb w}} 
\begin{bmatrix}
\mb h_{\mb z} \\ \mb h_{\mb w}
\end{bmatrix} \\
\ge \; & 2\paren{\mb x + \mb \delta_{\mb w}}^T \paren{\mb x + \mb \delta_{\mb z}} \mb h_{\mb w}^T \mb h_{\mb z} + \frac{1}{2} \norm{\mb x + \mb \delta_{\mb z}}{}^2 \norm{\mb h_{\mb w}}{}^2 + \frac{1}{2} \norm{\mb x + \mb \delta_{\mb w}}{}^2 \norm{\mb h_{\mb z}}{}^2 \nonumber \\
& \quad + 2\paren{\mb x^T \mb h_{\mb z} \mb \delta_{\mb w}^T \mb h_{\mb w} + \mb x^T \mb h_{\mb w} \mb \delta_{\mb z}^T \mb h_{\mb z} + \mb \delta_{\mb z}^T \mb h_{\mb z} \mb \delta_{\mb w}^T \mb h_{\mb w}} - \norm{\mb x}{}^2 \mb h_{\mb w}^T \mb h_{\mb z} + \frac{\rho}{2} \norm{\mb h_{\mb z} - \mb h_{\mb w}}{}^2\\
& \qquad (\text{the first square term ignored})\nonumber \\
\ge \; & \frac{1}{2}\norm{\mb x}{}^2\norm{\mb h_{\mb z} + \mb h_{\mb w}}{}^2 + 2\paren{\mb x^T \mb \delta_{\mb z} + \mb x^T \mb \delta_{\mb w} + \mb \delta_{\mb w}^T \mb \delta_{\mb z}} \mb h_{\mb w}^T \mb h_{\mb z} + \mb x^T \mb \delta_{\mb z} \norm{\mb h_{\mb w}}{}^2 + \mb x^T \mb \delta_{\mb w} \norm{\mb h_{\mb z}}{}^2 \nonumber \\
& \quad + 2\paren{\mb x^T \mb h_{\mb z} \mb \delta_{\mb w}^T \mb h_{\mb w} + \mb x^T \mb h_{\mb w} \mb \delta_{\mb z}^T \mb h_{\mb z} + \mb \delta_{\mb z}^T \mb h_{\mb z} \mb \delta_{\mb w}^T \mb h_{\mb w}} + \frac{\rho}{2} \norm{\mb h_{\mb z} - \mb h_{\mb w}}{}^2 \\
& \qquad (\text{expand the norm squared $\norm{\mb x + \mb \delta_{\mb z}}{}^2$ and $\norm{\mb x + \mb \delta_{\mb w}}{}^2$ and complete a square term})\nonumber \\
\ge\; &  \frac{1}{2}\norm{\mb x}{}^2\norm{\mb h_{\mb z} + \mb h_{\mb w}}{}^2 + \frac{\rho}{2} \norm{\mb h_{\mb z} - \mb h_{\mb w}}{}^2 - \frac{17}{16} \norm{\mb x}{}^2 \norm{\mb h_{\mb w}}{} \norm{\mb h_{\mb z}}{} - \frac{1}{8}\norm{\mb x}{}^2 \paren{\norm{\mb h_{\mb w}}{}^2 + \norm{\mb h_{\mb z}}{}^2} \\
& \qquad (\text{by Cauchy-Schwarz inequality and norm bounds on $\mb \delta_{\mb z}$ and $\mb \delta_{\mb w}$})\nonumber \\
\ge\; & \frac{1}{2}\norm{\mb x}{}^2 \paren{\norm{\mb h_{\mb z} + \mb h_{\mb w}}{}^2 + \norm{\mb h_{\mb z} - \mb h_{\mb w}}{}^2} - \frac{21}{32} \norm{\mb x}{}^2 \\
& \qquad (\text{by $\rho \ge \norm{\mb x}{}^2$ and the inequality $2ab \le a^2 + b^2$})\nonumber \\
=\; & \frac{11}{32} \norm{\mb x}{}^2 \ge \frac{1}{3} \norm{\mb x}{}^2,   
\end{align}
completing the proof. 
\end{proof}

\begin{proof}[Proof of \cref{lem:no-escape}]
The explicit formula for the iterates given by \eqref{alg:block-coord-desc} is:
\begin{align}
\mb z^{(k+1)} & = \paren{2\mb w^{k)} \paren{\mb w^{(k)}}^T  + \norm{\mb w^{(k)}}{}^2 \mb I + \rho \mb I}^{-1} \paren{2\mb x^T \mb w^{(k)} \mb x + \norm{\mb x}{}^2 \mb w^{(k)} + \rho \mb w^{(k)}}, \\
\mb w^{(k+1)} & = \paren{2\mb z^{(k+1)} \paren{\mb z^{(k+1)}}^T + \norm{\mb z^{(k+1)}}{}^2 \mb I+ \rho \mb I}^{-1} \paren{2\mb x^T \mb z^{(k+1)} \mb x + \norm{\mb x}{}^2 \mb z^{(k+1)} + \rho \mb z^{(k+1)}}. 
\end{align}
We will show that for any $\paren{\mb z, \mb x} \in N_{\mb x}$, one round of update will produce $\paren{\mb z^{+}, \mb w^{+}}$ that stays in $N_{\mb x}$. We will only show the proof for $\mb z^{+}$; proof for $\mb w^{+}$ is similar. 
Note that $2\mb w \mb w^T + \paren{\norm{\mb w}{}^2 + \rho} \mb I$ can be eigen-decomposed as 
\begin{align}
2\mb w \mb w^T + \paren{\norm{\mb w}{}^2 + \rho} \mb I = \paren{3\norm{\mb w}{}^2 + \rho} \frac{\mb w \mb w^T}{\norm{\mb w}{}^2} + \sum_{p=1}^{n-1} \paren{\norm{\mb w}{}^2 + \rho} \mb v_p \mb v_p^T, 
\end{align}
where $\mb v_p$ are mutually orthogonal unit vectors that are all orthogonal to $\mb w$. Thus, 
\begin{align}
\paren{2\mb w \mb w^T + \paren{\norm{\mb w}{}^2 + \rho} \mb I}^{-1} = \paren{3\norm{\mb w}{}^2 + \rho}^{-1} \frac{\mb w \mb w^T}{\norm{\mb w}{}^2} + \sum_{p=1}^{n-1} \paren{\norm{\mb w}{}^2 + \rho}^{-1} \mb v_p \mb v_p^T. 
\end{align}
Substituting the above inverse formula into the updating equation, we have 
\begin{equation}
\begin{split}
\mb z^{+} &= \paren{3\norm{\mb w}{}^2 + \rho}^{-1}\paren{\norm{\mb x}{}^2 + \rho} \mb w + 2\paren{3\norm{\mb w}{}^2 + \rho}^{-1} \frac{\paren{\mb w^T \mb x}^2}{\norm{\mb w}{}^2}\mb w \\
&+ 2\sum_{p=1}^{n-1} \paren{\norm{\mb w}{}^2 + \rho}^{-1} \mb v_p \mb v_p^T \mb x \mb w^T \mb x. 
\end{split}
\end{equation}
Moreover, we have 
\begin{align}
\norm{\mb x}{}^2 = \norm{\frac{\mb w \mb w^T}{\norm{\mb w}{}^2} \mb x}{}^2 + \norm{\sum_{p=1}^{n-1} \mb v_p \mb v_p^T \mb x}{}^2 = \frac{\paren{\mb w^T \mb x}^2}{\norm{\mb w}{}^2} + \sum_{p=1}^{n-1} \paren{\mb v_p^T \mb x}^2. 
\end{align}
Assume 
\begin{align}
\norm{\mb w}{} & = \beta \norm{\mb x}, \\
\mb w^T \mb x & = \alpha \norm{\mb w}{} \norm{\mb x}{} = \alpha \beta \norm{\mb x}{}^2. 
\end{align}
Then, 
\begin{align}
\sum_{p=1}^{n-1} \paren{\mb v_p^T \mb x}^2 = \paren{1-\alpha^2} \norm{\mb x}{}^2. 
\end{align}
Showing $\norm{\mb z^{+} - \mb x}{} \le 1/8 \cdot \norm{\mb x}{}$ is equivalent to showing $2\innerprod{\mb z^{+}}{\mb x} - \norm{\mb z^{+}}{}^2 \ge 63/64 \cdot \norm{\mb x}{}^2$. Further write $\rho = C \norm{\mb x}{}^2$ for some $C > 0$ to be determined later. Then,  
\begin{align} \label{eq:cd_no_escape_cross}
2\innerprod{\mb z^{+}}{\mb x} = \brac{2\paren{3\beta^2 + C}^{-1} \paren{1 + C+ 2\alpha^2} \alpha \beta + 4\paren{\beta^2 + C}^{-1} \alpha \beta \paren{1-\alpha^2}}\norm{\mb x}{}^2, 
\end{align}
and 
\begin{align}  \label{eq:cd_no_escape_norm}
\norm{\mb z^{+}}{}^2 = \brac{\beta^2 \paren{3\beta^2 +C}^{-2} \paren{1 + C + 2\alpha^2}^2 + 4\alpha^2 \beta^2 \paren{\beta^2 + C}^{-2} \paren{1-\alpha^2} } \norm{\mb x}{}^2. 
\end{align}
By our assumption that $\norm{\mb x - \mb w}{} \le 1/8 \cdot \norm{\mb x}{}$, we have 
\begin{align} \label{eq:cd_no_escape_range1}
7/8 \le \beta \le 9/8,  \quad (\text{triangular inequality}) 
\end{align}
and 
\begin{align} \label{eq:cd_no_escape_range2}
2\alpha \beta - \beta^2 \ge 63/64 \Longrightarrow \beta/2 + 63/\paren{128\beta}\le \alpha \le 1. 
\end{align}
From the above calculation, 
\begin{align}
2\innerprod{\mb z^{+}}{\mb x} - \norm{\mb z^{+}}{}^2 = h\paren{\alpha, \beta, C} \norm{\mb x}{}^2, 
\end{align}
where $h\paren{\alpha, \beta, C}$ can be read off from~\cref{eq:cd_no_escape_cross,eq:cd_no_escape_norm}. Taking partial derivative wrt $C$, we obtain 
\begin{equation}
\begin{split}
\nabla_C h &= -2\paren{3\beta^2 + C}^{-2}\paren{1+C+2\alpha^2} \alpha \beta + 2\paren{3\beta^2+C}^{-1} \alpha \beta -4 \paren{\beta^2 + C}^{-2} \alpha \beta \paren{1 - \alpha^2} \\
&+ 2\beta^2\paren{3\beta^2 + C}^{-3} \paren{1+C+2\alpha^2}^2 - 2\beta^2\paren{3\beta^2 + C}^{-2}\paren{1+C+2\alpha^2} \\
 &+8 \alpha^2 \beta^2 \paren{\beta^2 + C}^{-3} \paren{1-\alpha^2}. 
\end{split}
\end{equation}

When $\alpha = 1, \beta = 1$, $\nabla_C h = 0$ and $h = 1$. 

Otherwise, suppose $C > 1$. We have 
\begin{align}
\nabla_C h & \le 2\paren{3\beta^2 +C}^{-3} \beta \left[-\paren{3\beta^2+C} \paren{1+C+2\alpha^2} \alpha + \paren{3\beta^2 +C}^2 \alpha - 2\paren{3\beta^2 + C} \alpha \paren{1-\alpha^2} \right. \nonumber\\
& \quad \left. + \beta \paren{1+C+2\alpha^2}^2 - \beta\paren{3\beta^2 + C}\paren{1+C+2\alpha^2} + 4\alpha^2 \beta \paren{1-\alpha^2}\right] \\
& =  2\paren{3\beta^2 +C}^{-3} \beta \left[\paren{3\alpha \beta^2 -3\alpha + \beta + 2\alpha^2 \beta - 3\beta^3} C \right. \nonumber \\
& \quad \left. + \paren{9\alpha \beta^4 - 9 \alpha \beta^2 + \beta + 8\alpha^2 \beta - 3\beta^3 - 6\alpha^2 \beta^3}  \right] \\
& = 2\paren{3\beta^2 +C}^{-3} \beta \left\{\brac{-3\paren{\alpha - \beta}^2 \paren{\alpha + \beta} + \paren{\alpha^2 -1} \paren{3\alpha - \beta}} C \right. \nonumber \\
& \quad \left. + \paren{9\alpha \beta^4 - 9 \alpha \beta^2 + \beta + 8\alpha^2 \beta - 3\beta^3 - 6\alpha^2 \beta^3} \right\}. 
\end{align}
When $\alpha$ and $\beta$ do not assume $1$ simultaneously, 
\begin{align}
-3 \paren{\alpha - \beta}^2 \paren{\alpha + \beta} + \paren{\alpha^2 - 1}\paren{3\alpha - \beta} < 0
\end{align}
for $\alpha, \beta$ in the range computed in~\cref{eq:cd_no_escape_range1,eq:cd_no_escape_range2}. Thus, whenever 
\begin{align} \label{eq:cd_no_escape_cbound}
C \ge \frac{9\alpha \beta^4 - 9 \alpha \beta^2 + \beta + 8\alpha^2 \beta - 3\beta^3 - 6\alpha^2 \beta^3}{3\paren{\alpha - \beta}^2 \paren{\alpha + \beta} - \paren{\alpha^2 -1} \paren{3\alpha - \beta}}, 
\end{align}
we have $\nabla_C h \le 0$. Suppose the above condition on $C$ holds. For any fixed $\paren{\alpha, \beta}$, the minimum of $h$ is attained when $C \to \infty$, which implies 
\begin{align}
h \ge 2\alpha \beta - \beta^2 \ge 63/64. 
\end{align}
So, $\mb z^{+}$ remains in $N_{\mb x}$. 

Now we provide an upper bound for the right side of~\cref{eq:cd_no_escape_cbound}, denoted as $R_C$, to complete the proof. We know that the denominator of $R_c > 0$ when $\alpha = 1, \beta = 1$ does not happen. It can also be checked by examining the derivatives that the denominator is monotonically decreasing wrt $\alpha$, and the nominator is monotonically increasing wrt $\alpha$. These imply the maximum for $R_C$ is attained when $\alpha = 1$. So, 
\begin{align}
R_c \le \frac{9\beta^4 - 9\beta^2 + 9\beta -9\beta^3 }{3\paren{1-\beta}^2 \paren{1+\beta}} = 3\beta \le 27/8, 
\end{align}
as claimed. 
\end{proof}

\end{document}